\newtheorem{lemma}{Lemma}
\renewcommand{\vec}[1] {\mathbf{#1} }
\DeclareMathAlphabet\mathbfcal{OMS}{cmsy}{b}{n}
\newcommand{\dn}[1]{_{\protect\mbox{\protect\scriptsize{\rm #1}}}}
\newcommand{\up}[1]{^{\protect\mbox{\protect\scriptsize{\rm #1}}}}
\newcommand{\mum}{\,\upmu\mbox{\rm m}}
\newcommand{\ie}{\mathrm{i}\,}
\newcommand{\PCe}{\mathrm{e}\,}
\begin{document}

\title{%
    Classical and quantum electromagnetic momentum in anisotropic optical waveguides
}

\author{Denis A. Kopylov}
\email{denis.kopylov@uni-paderborn.de}
\affiliation{Department  of Physics, Paderborn University, % Warburger Stra\ss{}e 100, D-33098 Paderborn, Germany}
Paderborn, Germany}
\affiliation{Institute for Photonic Quantum Systems (PhoQS), Paderborn University, % Warburger Stra\ss{}e 100, D-33098 Paderborn, Germany}
Paderborn, Germany}
\author{Manfred Hammer}
\affiliation{Theoretical Electrical  Engineering, Paderborn University, % Warburger Stra\ss{}e 100, D-33098 Paderborn, Germany}
Paderborn, Germany}
\affiliation{Institute for Photonic Quantum Systems (PhoQS), Paderborn University, % Warburger Stra\ss{}e 100, D-33098 Paderborn, Germany}
Paderborn, Germany}

\date{\today}

\begin{abstract}
\noindent
    The guided modes supported by dielectric channel waveguides act as individual carriers of momentum. 
    We show this by proving that the modes satisfy an orthogonality condition which relates to the momentum of the optical electromagnetic field, with a link to the more familiar power (energy) orthogonality. 
    This result forms the basis for a rigorous, self-consistent procedure for the quantization of broadband guided electromagnetic fields in the typical channels used in integrated  photonic circuits.
    Our work removes the existing theoretical gap between the classical solution of the Maxwell equations for guided fields and the intuitive understanding of photons in waveguides.
    The presented approach is valid for straight, lossless, and potentially anisotropic, dielectric waveguides of general shape, in the linear regime, and including material dispersion.
    Examples for the hybrid modes of a thin film lithium niobate strip waveguide are briefly discussed.  

\end{abstract}

\maketitle

%%%%%%%%%%%%%%%%%%%%%%%%%%%%%%%%%%%%%%%%%%%%%%%%%%%%%%%%%%%%%%%%
%%%%%%%%%%%%%%%%%%%%%%%%%%%%%%%%%%%%%%%%%%%%%%%%%%%%%%%%%%%%%%%%
%%%%%%%%%%%%%%%%%%%%%%%%%%%%%%%%%%%%%%%%%%%%%%%%%%%%%%%%%%%%%%%%
%%%%%%%%%%%%%%%%%%%%%%%%%%%%%%%%%%%%%%%%%%%%%%%%%%%%%%%%%%%%%%%%
%%%%%%%%%%%%%%%%%%%%%%%%%%%%%%%%%%%%%%%%%%%%%%%%%%%%%%%%%%%%%%%%
%%%%%%%%%%%%%%%%%%%%%%%%%%%%%%%%%%%%%%%%%%%%%%%%%%%%%%%%%%%%%%%%
%%%%%%%%%%%%%%%%%%%%%%%%%%%%%%%%%%%%%%%%%%%%%%%%%%%%%%%%%%%%%%%%
%%%%%%%%%%%%%%%%%%%%%%%%%%%%%%%%%%%%%%%%%%%%%%%%%%%%%%%%%%%%%%%%
\noindent
Dielectric optical waveguides (WGs) \cite{Vassallo_1991_book} constitute the basic building blocks of compact integrated photonic circuits: They play the role of `light-wires', providing the connections between the functional elements on the chip.
Currently, thin film lithium niobate (TFLN) \cite{Poberaj_2012,Boes_2018} is among the promising platforms, in particular also for quantum optical applications \cite{Kai_2019, Nehra_2022, Sund_2023,Arge_2025}. 
Due to the anisotropy of the lithium niobate cores \cite{Weis_1985}, the properties of TFLN channels differ significantly from isotropic WGs, requiring elaborated theoretical models not only for classical \cite{Hammer_2025}, 
but also for quantum fields.

The existence of a set of discrete guided modes---stable independent eigen-solutions of Maxwell equations for the structures with translational invariance---can be seen as the main feature of general straight dielectric WGs. These modes are  
power orthogonal~\cite{Vassallo_1991_book,Jackson_book,Yariv_Yeh_book}, such that energy is preserved and transferred independently by different modes.
In addition to energy, from a fundamental point of view, one can expect that this also holds for the optical momentum. 
So far, however, to the best of our knowledge, a similar orthogonality condition related to the momentum of the guided field appears to be missing.

In the first part of this letter, we introduce a momentum-related orthogonality for the modes of general dielectric WGs. This shows that the guided fields transfer independently not only energy, but also the Minkowski momentum~\cite{Minkowski_1910}.
In the second part, based on the former conditions, we present a self-consistent quantization procedure for guided light in arbitrary transparent dispersive anisotropic channels.
 
%%%%%%%%%%%%%%%%%%%%%%%%%%%%%%%%%%%%%%%%%%%%%%%%%%%%%%%%%%%%%%%%
%%%%%%%%%%%%%%%%%%%%%%%%%%%%%%%%%%%%%%%%%%%%%%%%%%%%%%%%%%%%%%%%
%%%%%%%%%%%%%%%%%%%%%%%%%%%%%%%%%%%%%%%%%%%%%%%%%%%%%%%%%%%%%%%%
%%%%%%%%%%%%%%%%%%%%%%%%%%%%%%%%%%%%%%%%%%%%%%%%%%%%%%%%%%%%%%%%
%%%%%%%%%%%%%%%%%%%%%%%%%%%%%%%%%%%%%%%%%%%%%%%%%%%%%%%%%%%%%%%%
%%%%%%%%%%%%%%%%%%%%%%%%%%%%%%%%%%%%%%%%%%%%%%%%%%%%%%%%%%%%%%%%
%%%%%%%%%%%%%%%%%%%%%%%%%%%%%%%%%%%%%%%%%%%%%%%%%%%%%%%%%%%%%%%%
%%%%%%%%%%%%%%%%%%%%%%%%%%%%%%%%%%%%%%%%%%%%%%%%%%%%%%%%%%%%%%%%

\paragraph*{Waveguide modes.}

In line with traditional optical waveguide theory \cite{Vassallo_1991_book}, we consider a waveguide channel for uncharged transparent (lossless) dispersive dielectric and nonmagnetic ($\mu\equiv1$) linear media.
We focus on WGs of arbitrary cross-section shapes with the coordinate system shown in Fig.~\ref{fig_1_scheme}(a), where the full and transverse positions are combined as $\vec{r}\equiv(x, y, z)$ and $\vec{r}_\perp\equiv(x, y)$. 
We explicitly account for an anisotropic medium with a Hermitian tensorial relative permittivity $\tensor{\varepsilon}( \vec{r}_\perp , \omega)$. 
Thus, uniaxial, biaxial, gyro-magnetic as well as isotropic media are covered by our study.
Material dispersion manifests in a frequency dependence of $\tensor{\varepsilon}( \vec{r}_\perp , \omega)$, while the channel shape is encoded in the transverse dependence.   
For an optical electromagnetic (EM) field with a temporal dependence ${\sim}\exp(-\ie \omega t)$, the Maxwell curl equations and material equations \cite{Jackson_book} 
for the electric field $\vec{e}$,
magnetic field $\vec{h}$, 
dielectric displacement $\vec{d}$, 
and magnetic induction $\vec{b}$ are
\begin{equation}
\label{eq_maxwellfreqeq}
        \nabla \times \vec{e}  =  \ie\omega \vec{b}, 
~~ 
        \nabla \times \vec{h}  =  -\ie\omega \vec{d}, 
~~                                                  
        \vec{b} = \mu_0 \vec{h}, 
~~
        \vec{d} = \varepsilon_0 \tensor{\varepsilon}  
\vec{e}.~                                                                        
\end{equation}
Here $\vec{e}$, $\vec{h}$, $\vec{d}$ and $\vec{b}$ depend on $\vec{r}$; the angular frequency $\omega$ plays the role of a parameter;
$\varepsilon_0$ and $\mu_0$ are the vacuum permittivity and permeability, respectively.
In a more compact form, the fields are combined as 
$\vec{x} = \big(\vec{e}, \vec{h}, \vec{d}, \vec{b} \big)$.
\begin{figure}[h]
    \includegraphics[width=1.\columnwidth]{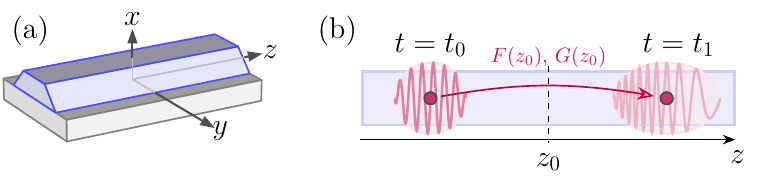}
    \caption{(a) Optical channel WG with translational invariance along its axis $z$. 
    (b) Schematic illustration of an optical pulse in a WG, propagating through the plane at $z_0$.
    It transfers the energy $F(z_0, t_0, t_1)=F(z_0)$ and momentum $G(z_0, t_0, t_1)=G(z_0)$. 
    In the quantum case, the minimal transferred excitation corresponds to a single-photon pulse. 
    }
    \label{fig_1_scheme}
\end{figure}
 
Guided modes are sought as nontrivial solutions of Eqs.~(\ref{eq_maxwellfreqeq}) in modal form
\begin{equation}
    \label{eq_modeeq}
    \vec{x}(\vec{r}) = \mathbfcal{X}(\vec{r}_\perp)~\PCe^{\ie k z},
\end{equation}
with a propagation constant $k$ and an integrable EM mode profile 
$\mathbfcal{X} \equiv  \big(\mathbfcal{E}, \mathbfcal{H}, \mathbfcal{D}, \mathbfcal{B} \big)$ 
that depends on the transverse coordinates $\vec{r}_\perp$ only.
In what follows, we merely require that the modal fields solve Eqs.~(\ref{eq_maxwellfreqeq}). 
Inserted in Eqs.~(\ref{eq_maxwellfreqeq}), the ansatz (\ref{eq_modeeq}) leads to a vectorial eigenvalue problem for the propagation constants $k$, with the mode profiles $\mathbfcal{X}$ 
as eigenfunctions. 
These eigenproblems can be given different forms \cite{Vassallo_1991_book} that do not permit, in general, any analytical treatment.

For the guided modes which are of interest here, there are several conditions:
(i) We assume that a number $\mathcal{M}$ of discrete modes exists, with profiles $\mathbfcal{E}_m$, $\mathbfcal{H}_m$, $\mathbfcal{D}_m$, $\mathbfcal{B}_m$ and related propagation constants $k_m$, for $m \in \bar{\mathcal{M}}$, and $\bar{\mathcal{M}} \equiv \set{1, 2, ..., \mathcal{M}}$.
If a WG does not support any guided modes, then $\bar{\mathcal{M}}= \emptyset$.
(ii) The fields $\mathbfcal{X}_m$ are localized (integrable over $\vec{r}_\perp$).
(iii) The propagation constants $k_m$ are real 
and pairwise different (non-degenerate modes).

The following two orthogonality conditions hold for the guided modes associated with any pair of indices $m, m' \in \bar{\mathcal{M}}$.
Starting from the Lorentz reciprocity theorem \cite{Vassallo_1991_book,Jackson_book,Yariv_Yeh_book}, one can prove the \textit{energy orthogonality condition}
\begin{equation}
     \int d\vec{r}_\perp \: [\mathbfcal{E}^*_m \times \mathbfcal{H}_{m'} + \mathbfcal{E}_{m'} \times \mathbfcal{H}^*_m ]_z = \xi_m\up{E}\,\delta_{m\,{m'} }.
    \label{eq_orthogonality_condition_energy}
\end{equation}
The quantity $\xi\up{E}_m$ (representing, typically, 
four times the optical power associated with the field $m$) is often used for the normalization of the guided modes. 
 
Further, the pairs of modes satisfy a \textit{momentum orthogonality condition}
\begin{multline}
    \int d\vec{r}_\perp ~ \Big[\mathbfcal{E}_m \odot \mathbfcal{D}^*_{m'}  + \mathbfcal{E}^*_{m'} \odot \mathbfcal{D}_m \\ +  \mathbfcal{H}^*_{m'} \odot \mathbfcal{B}_m  + \mathbfcal{H}_m \odot \mathbfcal{B}^*_{m'} \Big] = \xi_m\up{M} \delta_{m\,m'} 
     \label{eq_orthogonality_condition_momentum}
\end{multline}
with the $\odot$-product of two vectors $\vec{X}$, $\vec{Y}$ defined as 
\begin{equation}
\label{eq_odot_definition}
    \vec{X} \odot \vec{Y} \equiv \frac12(X_x Y_x + X_y Y_y - X_z Y_z).
\end{equation}
Here the normalization constant $\xi\up{M}_m$ has the physical dimension of momentum per time.

Moreover, there exists a \textit{relation between the orthogonality conditions} 
\begin{equation}
\dfrac{\xi\up{E}_m} {\xi\up{M}_m}
= \frac{\omega}{k_m}
= \frac{c}{\eta_m}.
   \label{eq_orthogonality_condition_connection}
\end{equation}
Here $\eta_m = k_m c / \omega$ is the effective index associated with mode $m$, for vacuum speed of light $c$. 
The ratio $\xi\up{E}_m/\xi\up{M}_m$ thus represents the phase velocity associated with the 
guided wave \eqref{eq_modeeq}.

Equations~\eqref{eq_orthogonality_condition_momentum} and \eqref{eq_orthogonality_condition_connection} are the main results of this paper, and their proofs are given in Appendix~\ref{app_derivation}.
The following paragraphs discuss their physical interpretation.

%%%%%%%%%%%%%%%%%%%%%%%%%%%%%%%%%%%%%%%%%%%%%%%%%%%%%%%%%%%%%%%%
%%%%%%%%%%%%%%%%%%%%%%%%%%%%%%%%%%%%%%%%%%%%%%%%%%%%%%%%%%%%%%%%
%%%%%%%%%%%%%%%%%%%%%%%%%%%%%%%%%%%%%%%%%%%%%%%%%%%%%%%%%%%%%%%%
%%%%%%%%%%%%%%%%%%%%%%%%%%%%%%%%%%%%%%%%%%%%%%%%%%%%%%%%%%%%%%%%
%%%%%%%%%%%%%%%%%%%%%%%%%%%%%%%%%%%%%%%%%%%%%%%%%%%%%%%%%%%%%%%%
%%%%%%%%%%%%%%%%%%%%%%%%%%%%%%%%%%%%%%%%%%%%%%%%%%%%%%%%%%%%%%%%
%%%%%%%%%%%%%%%%%%%%%%%%%%%%%%%%%%%%%%%%%%%%%%%%%%%%%%%%%%%%%%%%
%%%%%%%%%%%%%%%%%%%%%%%%%%%%%%%%%%%%%%%%%%%%%%%%%%%%%%%%%%%%%%%%

%%%%%%%%%%%%%%%%%%%%%%%%%%%%%%%%%%%%%%%%%%%%%%%%%%%%%%%%%%%%%%%%
%%%%%%%%%%%%%%%%%%%%%%%%%%%%%%%%%%%%%%%%%%%%%%%%%%%%%%%%%%%%%%%%
%%%%%%%%%%%%%%%%%%%%%%%%%%%%%%%%%%%%%%%%%%%%%%%%%%%%%%%%%%%%%%%%
%%%%%%%%%%%%%%%%%%%%%%%%%%%%%%%%%%%%%%%%%%%%%%%%%%%%%%%%%%%%%%%%
%%%%%%%%%%%%%%%%%%%%%%%%%%%%%%%%%%%%%%%%%%%%%%%%%%%%%%%%%%%%%%%%
%%%%%%%%%%%%%%%%%%%%%%%%%%%%%%%%%%%%%%%%%%%%%%%%%%%%%%%%%%%%%%%%
%%%%%%%%%%%%%%%%%%%%%%%%%%%%%%%%%%%%%%%%%%%%%%%%%%%%%%%%%%%%%%%%
%%%%%%%%%%%%%%%%%%%%%%%%%%%%%%%%%%%%%%%%%%%%%%%%%%%%%%%%%%%%%%%%

\paragraph*{Energy and momentum transfer.}

We first switch to the temporal domain, considering guided   
localized (integrable over $\vec{r}_\perp$) real electric and magnetic fields in space $\vec{r}$ and time $t$, namely, the vector 
$\vec{X} \equiv  \big(\vec{E}, \vec{H}, \vec{D}, \vec{B} \big)$, 
and the EM-field energy 
 \begin{equation}
   \sigma\up{E}(z_0,t) = \int d\vec{r}_\perp S_z(\vec{r}_\perp,z_0,t)
    \label{eq_energy_per_time}
\end{equation}
transferred per unit time through the $xy$-plane at $z_0$.
Here $\int d \vec{r}_\perp$ denotes the $xy$-integral over the infinite cross-sectional plane, and $S_z(\vec{r},t)$ is the energy flow \cite{Jackson_book}, the $z$-projection of the Poynting vector $\vec{S} \equiv \vec{E} \times \vec{H}$.

In addition, we consider the Minkowski momentum~\cite{Minkowski_1910,Pfeifer_2007,Barnett_2010,Griffiths_2011}, whose momentum density is defined as $\vec{p}^{M} \equiv \vec{D} \times \vec{B}$.
The momentum flow is determined by the Minkowski stress-tensor~$\tensor{T}(\vec{r}, t)$.
For the former $xy$-plane, the EM field momentum transferred per unit time is 
\begin{equation}
   \sigma\up{M}(z_0,t) = \int d\vec{r}_\perp \; T_{zz}(\vec{r}_\perp,z_0,t),
    \label{eq_momentum_per_time}
\end{equation}
where the ${zz}$-component of the Minkowski tensor reads 
\begin{equation}
    T_{zz} \equiv \vec{E} \odot \vec{D} + \vec{H} \odot \vec{B}.
\end{equation}
  
\paragraph*{Monochromatic light in a waveguide.}
Let us first consider a monochromatic guided field at frequency $\omega$ propagating in the $z$-direction
\begin{equation}
    \vec{X}(\vec{r}, t) = \sum_{m \in \bar{\mathcal{M}}}  c_m \mathbfcal{X}_m(\vec{r}_\perp, \omega) e^{i k_m(\omega) z} e^{-i\omega  t} + c.c. .
    \label{eq_monochromatic_field_multimode}
\end{equation}
$\vec{X}(\vec{r}, t)$ collects the total fields in the temporal domain, and $\mathbfcal{X}_m(\vec{r}_\perp, \omega)$ is the profile of the $m$-th mode.

For monochromatic waves, it is convenient to consider the time-averaged Poynting vector $\tilde{\vec{S}}(\vec{r})  \equiv \frac{1}{\tau} \int_0^{\tau} dt \; \vec{S}(\vec{r}, t)$, where $\tau = 2\pi/\omega$ is the time period for frequency $\omega$~\cite{Jackson_book}.
Similarly, we consider the time-averaged momentum flow 
$\tilde{T}_{zz}(\vec{r}) \equiv \frac{1}{\tau} \int_0^\tau dt \; T_{zz}(\vec{r}, t)$.
Using Eqs.~\eqref{eq_energy_per_time} and \eqref{eq_momentum_per_time}, the averaged energy $\tilde\sigma\up{E}$
and momentum $\tilde\sigma\up{M}$ transferred by the guided monochromatic light in the form \eqref{eq_monochromatic_field_multimode} trough the $xy$-plane at $z_0$ per unit time are
\begin{equation}
    \tilde\sigma\up{E}(z_0) = \int d\vec{r}_\perp \tilde S_z(\vec{r}) = \sum_{m \in \tilde{\mathcal{M}}}  |c_m|^2 \xi\up{E}_m,
    \label{eq_energy_flow}
\end{equation}
and
\begin{equation}
    \tilde\sigma\up{M}(z_0) = \int d\vec{r}_\perp \; \tilde T_{zz}(\vec{r})  = \sum_{m \in \tilde{\mathcal{M}}} |c_m|^2 \xi\up{M}_m,
    \label{eq_momentum_flow}
\end{equation}
respectively.
Here  
    $\tilde\sigma\up{E}_m \equiv |c_m|^2 \xi\up{E}_m$ and $\tilde\sigma\up{M}_m \equiv |c_m|^2 \xi\up{M}_m$
are the averaged energy and momentum transferred per unit time by the $m$-th mode, respectively.

Obviously, by virtue of the orthogonality properties 
\eqref{eq_orthogonality_condition_energy}
and 
\eqref{eq_orthogonality_condition_momentum}, the separate guided modes contribute individually to the total transfer of energy and momentum. Both the energy and momentum associated with the propagation of each mode as part of the interference field 
\eqref{eq_monochromatic_field_multimode}
are constant along $z$ (independent of the choice of $z_0$).

%%%%%%%%%%%%%%%%%%%%%%%%%%%%%%%%%%%%%%%%%%%%%%%%%%%%%%%%%%%%%%%%
%%%%%%%%%%%%%%%%%%%%%%%%%%%%%%%%%%%%%%%%%%%%%%%%%%%%%%%%%%%%%%%%
%%%%%%%%%%%%%%%%%%%%%%%%%%%%%%%%%%%%%%%%%%%%%%%%%%%%%%%%%%%%%%%%
%%%%%%%%%%%%%%%%%%%%%%%%%%%%%%%%%%%%%%%%%%%%%%%%%%%%%%%%%%%%%%%%
%%%%%%%%%%%%%%%%%%%%%%%%%%%%%%%%%%%%%%%%%%%%%%%%%%%%%%%%%%%%%%%%
%%%%%%%%%%%%%%%%%%%%%%%%%%%%%%%%%%%%%%%%%%%%%%%%%%%%%%%%%%%%%%%%
%%%%%%%%%%%%%%%%%%%%%%%%%%%%%%%%%%%%%%%%%%%%%%%%%%%%%%%%%%%%%%%%
%%%%%%%%%%%%%%%%%%%%%%%%%%%%%%%%%%%%%%%%%%%%%%%%%%%%%%%%%%%%%%%%

\paragraph*{Broadband guided pulsed light.}

A general broadband vector field $\vec{X}(\vec{r}, t)$ can be written as a Fourier series and splitted into its positive and negative frequency parts~\cite{Glauber_1963}
\begin{equation}
    \vec{X} = \vec{X}^{(+)} +   \vec{X}^{(-)},
    \label{eq_field_analytical}
\end{equation}
with
\begin{equation}
    \vec{X}^{(+)}(\vec{r}, t) = \frac{1}{\sqrt{2\pi}} \int_0^\infty d\omega e^{-i\omega t} \vec{x}(\vec{r}, \omega) ,
    \label{eq_field_fourier_plus}
\end{equation}
and $ \vec{X}^{(-)}(\vec{r}, t)  = [\vec{X}^{(+)}(\vec{r}, t)]^* $.
For the guided field traveling in the positive $z$-direction, the complex amplitude $\vec{x}$  can be written as
\begin{equation}
    \vec{x}(\vec{r}, \omega) =  \sum_{m \in \bar{\mathcal{M}}(\omega)} c_m(\omega) \mathbfcal{X}_m(\vec{r}_\perp,\omega)e^{ik_m(\omega)z}.
    \label{eq_pulsed_guided_light}
\end{equation}
Note that here all the quantities (mode profiles, propagation constants, the normalization constants $\xi^\cdot_\cdot$, but also the indices $m$) are frequency dependent (modes appear / disappear for varying $\omega$). 
 
For broadband light \eqref{eq_field_fourier_plus}, the time-averaged Poynting vector and Minkowski tensor can not be accurately defined: Averaging over a time period at a central frequency is reasonable only for a narrowband field. 
To find an interpretation of the orthogonality conditions for broadband light, let us first consider the two quantities
\begin{equation}
\label{eq_Fzt}
    F(z_0, t_0, t_1) =  \int_{t_0}^{t_1} dt \int d \vec{r}_\perp \: S_z(\vec{r}_\perp,z_0,t),
\end{equation}
and
\begin{equation}
\label{eq_Gzt}
    G(z_0, t_0, t_1) =  \int_{t_0}^{t_1} dt \int d \vec{r}_\perp \: T_{zz}(\vec{r}_\perp,z_0,t).
\end{equation}
These are the energy and momentum, transferred through the $xy$-plane at $z_0$ within the time interval from $t_0$ to $t_1$, respectively (see Fig.~\ref{fig_1_scheme}(b)).
Next, by taking the limits $t_0 \rightarrow -\infty$ and $t_1 \rightarrow +\infty$, 
and by combining expressions 
\eqref{eq_Fzt} and \eqref{eq_Gzt} with Eqs.~\eqref{eq_field_analytical} and \eqref{eq_pulsed_guided_light}, we obtain
\begin{equation}
    F(z_0) \equiv F(z_0, -\infty, \infty)  =  \int_0^\infty \!\!\! d\omega  \!\!\!  \sum_{m \in \bar{\mathcal{M}}(\omega)}  \!\!\!\!\! |c_m(\omega)|^2 \xi\up{E}_m(\omega),
    \label{eq_transferred_energy}
\end{equation}
and
\begin{equation}
    G(z_0) \equiv G(z_0, -\infty, \infty) = \int_0^\infty \!\!\! d\omega \!\!\! \sum_{m \in \bar{\mathcal{M}}(\omega)} \!\!\!\!\!  |c_m(\omega)|^2 \xi\up{M}_m(\omega).
    \label{eq_momentum_transferred}
\end{equation}
We call $F$ and $G$ the \textit{transferred energy} and \textit{transferred momentum}, respectively.
The values $ \sigma\up{E}_m(\omega) \equiv |c_m(\omega)|^2 \xi\up{E}_m(\omega) $ and $ \sigma\up{M}_m(\omega) \equiv |c_m(\omega)|^2 \xi\up{M}_m(\omega) $ are then the spectral densities of the energy and momentum transferred by the $n$-th mode, respectively.
As a result, for broadband light, the orthogonality conditions \eqref{eq_orthogonality_condition_energy} and \eqref{eq_orthogonality_condition_momentum} guarantee that the transferred energy and the transferred momentum are additive both for different spatial WG modes and different frequencies.

%%%%%%%%%%%%%%%%%%%%%%%%%%%%%%%%%%%%%%%%%%%%%%%%%%%%%%%%%%%%%%%%
%%%%%%%%%%%%%%%%%%%%%%%%%%%%%%%%%%%%%%%%%%%%%%%%%%%%%%%%%%%%%%%%
%%%%%%%%%%%%%%%%%%%%%%%%%%%%%%%%%%%%%%%%%%%%%%%%%%%%%%%%%%%%%%%%
%%%%%%%%%%%%%%%%%%%%%%%%%%%%%%%%%%%%%%%%%%%%%%%%%%%%%%%%%%%%%%%%
%%%%%%%%%%%%%%%%%%%%%%%%%%%%%%%%%%%%%%%%%%%%%%%%%%%%%%%%%%%%%%%%
%%%%%%%%%%%%%%%%%%%%%%%%%%%%%%%%%%%%%%%%%%%%%%%%%%%%%%%%%%%%%%%%
%%%%%%%%%%%%%%%%%%%%%%%%%%%%%%%%%%%%%%%%%%%%%%%%%%%%%%%%%%%%%%%%
%%%%%%%%%%%%%%%%%%%%%%%%%%%%%%%%%%%%%%%%%%%%%%%%%%%%%%%%%%%%%%%%

\paragraph*{Field quantization in waveguides.}
 
Typically, the photons in WGs are considered phenomenologically as an abstract bosonic field, i.e., their internal vectorial structure is often not taken into account (see, e.g., Refs.~\cite{Barral_2020,Hamilton_2022,EFH_2022,Sheremet_2023,Delgado_Quesada_2025}).
This comes from the fact that the standard quantization approach~\cite{Mandel_Wolf_book,Vogel_Welsch_book,Raymer_2020_REVIEW,Barnett_2024} meets difficulties when dispersive inhomogeneous media are considered: the corresponding eigenvalue problem for frequencies becomes nonlinear.
Significant simplifications appear when the eigenfunctions are analytically known, e.g., for homogeneous and layered isotropic media~\cite{Huttner_1991,Huttner_1992,Gruner_1996,Bhat_2006,Raabe_2007,Hodgson_2022,Waite_2025} or for homogeneous anisotropic media~\cite{Alekseev_1966,klyshko1988book,Messinger_2020}.
In some applications, the narrowband light approximation is useful (see, e.g., Refs.~\cite{Raymer_2020_REVIEW,Quesada_2020}).
Nevertheless, none of the aforementioned approaches are compatible with the orthogonality conditions in the form of Eqs.~\eqref{eq_orthogonality_condition_energy} and \eqref{eq_orthogonality_condition_momentum}.

Alternatively to the standard approach, the quantization of propagating fields based on temporal modes shows itself to good advantage~\cite{Abram_1987,Huttner_1990,Ben_Aryeh_1991,Perina_1995,Horoshko_2022}.
In particular, for vectorial WG fields, quantization of this type was applied for dispersion-free isotropic WGs~\cite{Li_ares_2003,Li_ares_2008}; however, these assumptions significantly limit possible applications.
Here, we go beyond these strong restrictions.
We show how both the energy and momentum orthogonality conditions lead to a self-consistent quantization procedure for dispersive anisotropic WGs.

From a physical point of view, both the momentum and the energy transferred through some fixed plane at $z_0$ should be quantized.  
Therefore, let us write the guided field in a form that provides a correct discretization of the transferred momentum.
To that end, we write the field operator at $z_0$ as 
\begin{multline}
    \hat{\vec{X}}^{(+)}(\vec{r}_\perp, z_0, t) = \frac{1}{\sqrt{2\pi}} \int_0^\infty d\omega \: e^{-i\omega t} \\ \times \sum_{m \in \bar{\mathcal{M}}(\omega)} \sqrt{\frac{ \hbar k_m(\omega) }{ \xi\up{M}_m(\omega)} } \mathbfcal{X}_m(\vec{r}_\perp, \omega)  \hat{a}_m(z_0,\omega).
    \label{eq_field_operator_fourier_plus_cont}
\end{multline}
Here, the operators $\hat{a}_m(z_0,\omega)$ obey the bosonic commutation relations $[\hat{a}_m(z_0,\omega), \hat{a}_{m^\prime}^\dagger(z_0,\omega^\prime)] = \delta_{m m^\prime}\delta(\omega-\omega^\prime)$.
Using Eq.~\eqref{eq_field_operator_fourier_plus_cont} and $\hat{\vec{X}}^{(-)} = [\hat{\vec{X}}^{(+)}]^\dagger$, the Hermitian transferred momentum operator takes the form
\begin{equation}
    \hat{G} = \int_{0}^{\infty} d\omega  \sum_{m \in \bar{\mathcal{M}}(\omega)}  \hbar k_m(\omega) \left[\hat{a}_m^\dagger(z_0,\omega)\hat{a}_m(z_0,\omega) + \frac12 \right].
    \label{eq_operator_transferred_momentum}
\end{equation}
At the same time, with the use of Eq.~\eqref{eq_orthogonality_condition_connection}, we get the Hermitian transferred energy operator
\begin{equation} 
    \hat{F} = \int_{0}^{\infty} d\omega  \sum_{m \in \bar{\mathcal{M}}(\omega)}  \hbar \omega \left[\hat{a}_m^\dagger(z_0,\omega)\hat{a}_m(z_0,\omega) + \frac12 \right].
    \label{eq_operator_transferred_energy}
\end{equation}
The operator $\hat{n}_m(z_0,\omega) \equiv \hat{a}_m^\dagger(z_0,\omega)\hat{a}_m(z_0,\omega)$ is the photon number density operator in the $m$-th mode; the operator
\begin{equation}
  \hat{N}(z_0) \equiv \int_{0}^{\infty} d\omega  \sum_{m \in \bar{\mathcal{M}}(\omega)}  \hat{n}_m(z_0,\omega)  
\end{equation}
is the total photon number operator of photons crossing the plane $z_0$.
In the same way as in Refs.~\cite{Huttner_1990,Ben_Aryeh_1991}, the Fock-states are build for the fixed plane as the eigenvectors of the photon number density operator $\hat{n}_m({z_0},\omega)$, namely: $\hat{n}_m({z_0},\omega)\ket{n_m({z_0},\omega)} = n_m({z_0},\omega) \ket{n_m({z_0},\omega)} $. 

The diagonal forms of $\hat{F}(z)$ and $\hat{G}(z)$ indicate independent transfer of energy and momentum by monochromatic photons in different WG-modes and different frequencies. 
Moreover, a photon in the mode $m$ with frequency $\omega$ transfers the energy $\hbar \omega$ together with the momentum $\hbar k_m(\omega)$.
Intuitively, this result seems to be trivial; however, the strict proof requires Eqs.~\eqref{eq_orthogonality_condition_energy}, \eqref{eq_orthogonality_condition_momentum} and \eqref{eq_orthogonality_condition_connection}.

Every excitation $\hat{a}_{m}$ corresponds to the simultaneous excitation of four fields $\mathbfcal{X}_{m} = (\mathbfcal{E}, \mathbfcal{H}, \mathbfcal{D}, \mathbfcal{B})$.
Rewriting the fields as $\mathbfcal{B}=\mu_0 \mathbfcal{H}$ and $\mathbfcal{D} = \varepsilon_0 \mathbfcal{E} + \mathbfcal{P}$, we can interpret the field excitation $\hat{a}_m(\omega)$ as a simultaneous excitation of two EM-fields $\mathbfcal{E}$, $\mathbfcal{H}$ together with the polarization vector $\mathbfcal{P}$. 
Therefore, strictly speaking, our field excitations are `waveguide-assisted' \textit{photon-polaritons}.

%%%%%%%%%%%%%%%%%%%%%%%%%%%%%%%%%%%%%%%%%%%%%%%%%%%%%%%%%%%%%%%%
%%%%%%%%%%%%%%%%%%%%%%%%%%%%%%%%%%%%%%%%%%%%%%%%%%%%%%%%%%%%%%%%
%%%%%%%%%%%%%%%%%%%%%%%%%%%%%%%%%%%%%%%%%%%%%%%%%%%%%%%%%%%%%%%%
%%%%%%%%%%%%%%%%%%%%%%%%%%%%%%%%%%%%%%%%%%%%%%%%%%%%%%%%%%%%%%%%
%%%%%%%%%%%%%%%%%%%%%%%%%%%%%%%%%%%%%%%%%%%%%%%%%%%%%%%%%%%%%%%%
%%%%%%%%%%%%%%%%%%%%%%%%%%%%%%%%%%%%%%%%%%%%%%%%%%%%%%%%%%%%%%%%
%%%%%%%%%%%%%%%%%%%%%%%%%%%%%%%%%%%%%%%%%%%%%%%%%%%%%%%%%%%%%%%%
%%%%%%%%%%%%%%%%%%%%%%%%%%%%%%%%%%%%%%%%%%%%%%%%%%%%%%%%%%%%%%%%

\paragraph*{ Temporal modes. }

Let us consider broadband photons by introducing broadband frequency modes, known also as temporal modes~\cite{Brecht_2015,Raymer_2020}.
For each spatial mode $m$, we can take the broadband basis $\set{ f^l_m(\omega) }$, with $\int d\omega [f^l_m(\omega)]^*f^{l^\prime}_m(\omega) = \delta_{l {l^\prime}}$. 
The upper indices denote the different broadband frequency modes.
Introducing broadband operators 
\begin{equation}
    \hat{A}^l_m(z_0)  = \int d\omega \ f^l_m(\omega) \hat{a}_m(z_0,\omega), 
    \label{eq_broadband_mode}
\end{equation} 
their commutation relation is $\big[\hat{A}^l_m(z_0),[\hat{A}^{l^\prime}_{m^\prime}(z_0)]^\dagger\big]=\delta_{l {l^\prime}}\delta_{m m^\prime}$.
The $n$-photon Fock state in the broadband mode $f^l_m(\omega)$ is
\begin{equation}
     \ket{n^l_m}_{z_0} = \Big( [\hat{A}^l_m(z_0)]^\dagger \Big)^n \ket{0}_{z_0}.
\end{equation}
The basis vectors for the studied multimode system are constructed as $\bigotimes_{m,n,l } \ket{n^l_m}_{z_0}$.

\begin{figure}
\centering
\includegraphics[width=\linewidth]{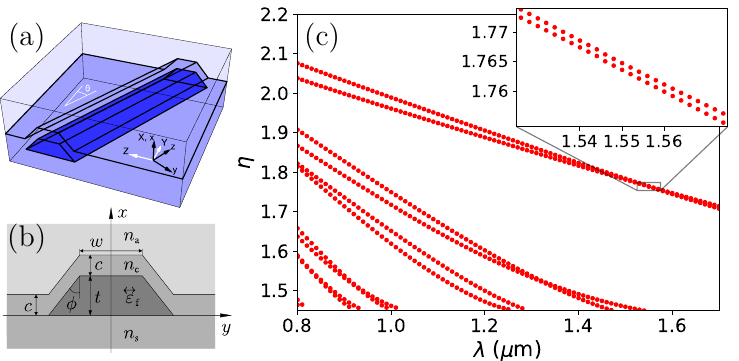}
\caption{Results for an X-cut TFLN strip WG.
    (a) WG schematic; crystal and channel coordinates are 
    $\{X, Y, Z\}$ and   
    $\{x, y, z\}$, respectively.
    (b) WG cross-section; 
        parameters:  $t = 0.6\mum$,  $c = 1\mum$, $w = 0.3445\mum$, sidewall angle $\phi = 30^\circ$, orientation of the channel axis $z$ at angle $\theta = 30^\circ$ versus the crystal $Y$-axis,
        dispersion for SiO$_2$ ($n\dn{s}$ and $n\dn{c}$) and TFLN media (tensor permittivity $\tensor{\varepsilon}\dn{f}$), with air cover ($n\dn{a}$) are as in Refs.~\cite{EFH_2022,EFH_2023,Herzinger_1998}. 
     (c) Effective indices $\eta_m(\omega) = k_m(\omega)c/\omega$ of guided modes as a function of vacuum wavelength $\lambda = 2\pi c/\omega$.
}
\label{fig_2_wg}
\end{figure}

\paragraph*{Spatial Heisenberg equation.}
The transferred momentum operator~\eqref{eq_operator_transferred_momentum} is of great importance since it plays the role of the spatial evolution generator~\cite{Ben_Aryeh_1991,Horoshko_2022,Li_ares_2003,Li_ares_2008}: in other words, it connects fields at different positions $z$. 
In the Heisenberg picture, the operators  obey the spatial Heisenberg equations 
\begin{equation}
    \dfrac{d \hat{a}_m(z,\omega)}{d z} = \frac{1}{i\hbar} \big[ \hat{a}_m(z,\omega), \hat{G} \big] = i k_m(\omega) \hat{a}_m(z,\omega),
\end{equation}
and their solution for  an initial field $\hat{a}_m(z_0,\omega)$ have the form
\begin{equation}
    \hat{a}_m(z,\omega) = \hat{a}_m(z_0,\omega) e^{i k_m(\omega) (z-z_0)}.
\end{equation} 
Note that both WG and material dispersion are embedded into the propagation constant $k_m(\omega)$, leading to dispersion spreading and nonzero chirp (see schematic in Fig.~\ref{fig_1_scheme}(b)).

\paragraph*{Example: An anisotropic TFLN strip WG.}

\begin{figure}
    \centering
    \includegraphics[width=1.\linewidth]{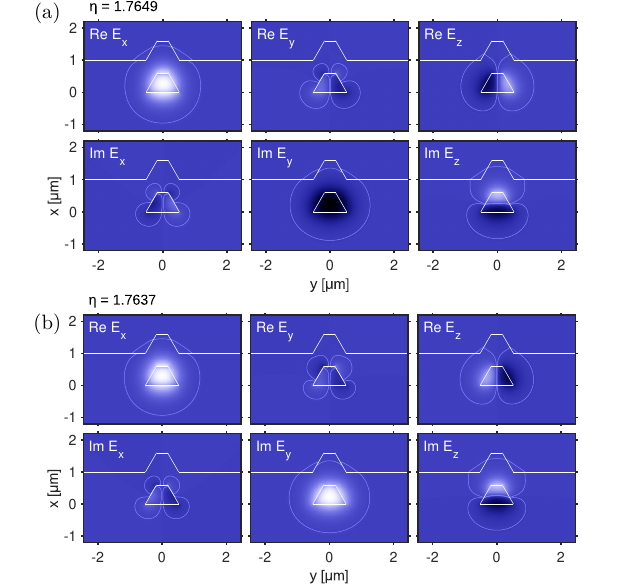}
    \caption{Electric profile components of hybrid (a) first and (b) second order modes at $\lambda = 1.55\mum$, colormap: black-negative, blue-zero, white-positive.  }
    \label{fig_3_profiles}
\end{figure}

Here, we apply our approach to a realistic (adopted from Ref.~\cite{Hammer_2025}) TFLN WG with parameters shown in Fig.~\ref{fig_2_wg}(a,b).
For numerical simulations, we used a commercial finite element solver~\cite{JCMwave}; the material dispersion of LiNbO$_3$ and SiO$_2$ has been taken into account.
As a result, for each frequency $\omega$, we obtained sets of modes $\mathbfcal{X}_{m(\omega)}(\vec{r}_\perp, \omega)$ with corresponding propagation constants $k_{m(\omega)}(\omega)$ (Fig.~\ref{fig_2_wg}(c)).
For the samples at $0.6$, $0.8$, $1.0$, and $1.55\mum$, we calculated the cross-products $\xi\up{E}_{m, m'}$ and $\xi\up{M}_{m, m'}$, defined as the left-hand-sides of Eqs.~\eqref{eq_orthogonality_condition_energy} and
\eqref{eq_orthogonality_condition_momentum}, respectively, for pairs of modes $m$, $m'$ of the same channel.
The numerical accuracy $ 2|\xi\up{E,M}_{m, m'}|/(\xi\up{E,M}_m + \xi\up{E,M}_{m'}) $ was below $10^{-3}$ for all $m \neq m'$. 

According to the quantization procedure, each of the obtained modes $m$ corresponds to the independent monochromatic quantized bosonic field excitation $\hat{a}_m$.
Note that these monochromatic photons obey the non-linear WG-dispersion (see Fig.~\ref{fig_2_wg}(c)), which inherits the material dispersion.

With this example, we want to emphasize that the considered field excitations significantly differ from the well-known photons in vacuum or homogeneous media.
Indeed, the strongly anisotropic core in the considered WG supports vectorial, pronouncedly hybrid modes with essentially complex field profiles (Fig.~\ref{fig_3_profiles}, for details see Ref.~\cite{Hammer_2025}).
These hybrid modes have close propagation constants (Fig.~\ref{fig_2_wg}(c)) and are orthogonal in the sense of the energy and momentum conditions.
Note that the electric profile for both hybrid modes contains not only longitudinal components $\mathcal{E}_z$, but also both pronounced $\mathcal{E}_x$ and $\mathcal{E}_y$ components, indicating that the hybrid modes are not even remotely transverse-electric (TE) nor transverse-magnetic (TM) -like.
Nevertheless, the presented quantization procedure covers these hybrid modes, and the photons inherit the classical field distributions.

\paragraph*{Summary.}

Summing up, we present a momentum orthogonality condition for dielectric WGs.
The connection between this new condition and the well-known energy related orthogonality facilitates a self-consistent quantization scheme for broadband guided EM-fields.
The quantization procedure is based on the solution of the Maxwell equations for guided modes in the frequency domain, and thus it supports both the material- and the WG- dispersion. 
Moreover, it is valid for transparent dielectric WGs with arbitrary cross-sections (including WG arrays and one-dimensional waveguides) with Hermitian dielectric permittivity (including uniaxial, biaxial, gyro-magnetic as well as isotropic media).

What is most important is that the actual field interactions can be included in the quantum description: We can build the quantum spatial propagators for interacting fields (e.g., in the nonlinear optical case) based on an accurate field distribution obtained from the Maxwell equations. 
In particular, with the use of perturbation theory, we can consider the non-ideal lossy WGs~\cite{Hammer_2024} and apply the spatial Langevin equation~\cite{Kopylov_2025}.
Therefore, the introduced quantization scheme provides a fundamental framework for studying WGs for quantum technological tasks.

\begin{acknowledgments}
% \paragraph*{Acknowledgments.}
\noindent
    We thank Torsten Meier, Jens F\"orstner and Polina Sharapova for helpful discussions.
    This work is supported by the Deutsche Forschungsgemeinschaft  (DFG) via the TRR 142/3 (Project No. 231447078, Subproject No. C10).
\end{acknowledgments}

\bibliography{references.bib}

\newpage

\appendix

%%%%%%%%%%%%%%%%%%%%
%%%%%%%%%%%%%%%%%%%%
%%%%%%%%%%%%%%%%%%%%
%%%%%%%%%%%%%%%%%%%%

%!TEX root = main.tex

\section{ Mathematical identities}
\label{appendix_1}

\noindent
Here, we collect a few mathematical results that will be used for the later proof of the orthogonality conditions.
%%%%%%%%%%%%%%%%%%%%
%%%%%%%%%%%%%%%%%%%%
%%%%%%%%%%%%%%%%%%%%
\begin{lemma}
    For a transverse-evanescent field $\vec{X}(\vec{r})$ with $\lim_{|\vec{r}_\perp|\to\infty} \vec{X}(\vec{r})=0$, the following equality holds:
    \begin{equation}
        \int_{-\infty}^\infty d\vec{r}_\perp \; \nabla \cdot \vec{X}(\vec{r}) = \dfrac{\partial}{\partial z} \int_{-\infty}^\infty d\vec{r}_\perp \left[   X_z(\vec{r}) \right],
    \end{equation}
    where $\vec{r}_\perp \equiv (x,y)$, $ d\vec{r}_\perp  \equiv dxdy$ and $X_z(\vec{r})\equiv \vec{X}(\vec{r}) \cdot \vec{v}_z $.
    \label{lemma_1}
\end{lemma}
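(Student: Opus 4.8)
The plan is to reduce the statement to an elementary interchange of integration and differentiation, together with the vanishing of boundary terms at transverse infinity. First I would expand the divergence in Cartesian components, writing $\nabla \cdot \vec{X} = \partial_x X_x + \partial_y X_y + \partial_z X_z$, and then integrate this expression over the infinite cross-sectional plane. The natural split is into a purely transverse part, $\partial_x X_x + \partial_y X_y = \nabla_\perp \cdot \vec{X}_\perp$ with $\vec{X}_\perp \equiv (X_x, X_y)$, and the longitudinal part $\partial_z X_z$. I would then treat these two contributions separately.

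For the transverse part, the key step is the two-dimensional divergence theorem (equivalently, the fundamental theorem of calculus applied successively in $x$ and $y$):
\begin{equation}
\int d\vec{r}_\perp \; \nabla_\perp \cdot \vec{X}_\perp = \oint_{|\vec{r}_\perp| \to \infty} \vec{X}_\perp \cdot \vec{n}\, dl ,
\end{equation}
where $\vec{n}$ is the outward normal of a large contour in the $xy$-plane. Here I would invoke the hypothesis $\lim_{|\vec{r}_\perp|\to\infty} \vec{X}(\vec{r})=0$: since all components of $\vec{X}$ decay at transverse infinity, the boundary integral vanishes in the limit, so the entire transverse contribution is zero.

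For the longitudinal part, I would interchange the $z$-derivative with the transverse integral,
\begin{equation}
\int d\vec{r}_\perp \; \partial_z X_z = \frac{\partial}{\partial z} \int d\vec{r}_\perp \; X_z ,
\end{equation}
which directly yields the right-hand side of the claimed identity. Combining this with the vanishing transverse contribution completes the argument. The main obstacle—such as it is for so elementary a statement—lies in justifying these two limiting operations rigorously: the vanishing of the contour integral requires that $\vec{X}_\perp$ decay fast enough that $|\vec{X}_\perp|$ times the contour length tends to zero (guaranteed by the integrability/localization already assumed for guided modes), and the interchange of $\partial_z$ with $\int d\vec{r}_\perp$ requires uniform integrability of $\partial_z X_z$, which follows from the assumed smoothness and transverse evanescence via dominated convergence. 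Given these regularity properties, both steps are routine, and I would state them as consequences of the localization hypothesis rather than dwell on the measure-theoretic details.
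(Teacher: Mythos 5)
Your proposal is correct and follows essentially the same route as the paper: split $\nabla$ into transverse and longitudinal parts, kill the transverse contribution via the two-dimensional divergence theorem and the decay hypothesis, and pull the $z$-derivative out of the cross-sectional integral. If anything, you are slightly more careful than the paper, since you note explicitly that the boundary term requires decay faster than the reciprocal of the contour length and that the derivative--integral interchange needs a dominated-convergence type justification.
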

\begin{proof}
    Let us represent the operator $\nabla$ as $\nabla = \nabla_\perp + \vec{v}_z \dfrac{\partial}{\partial z} $, where $\nabla_\perp \equiv \vec{v}_x \dfrac{\partial}{\partial x} + \vec{v}_y \dfrac{\partial}{\partial y}$ and $\vec{v}_i$ are the unit vectors.
   The integral over a surface $\Sigma$ in the $xy$-plane reads
    \begin{equation}
        \int_\Sigma d\vec{r}_\perp [\nabla \cdot \vec{X}(\vec{r})] = \int_\Sigma d\vec{r}_\perp [\nabla_\perp \cdot \vec{X}(\vec{r})]  + \int_\Sigma d\vec{r}_\perp  \dfrac{\partial}{\partial z} X_z(\vec{r}).
    \end{equation}
    Applying the two-dimensional divergence theorem~\cite{Yariv_Yeh_book}, the second integral is
    \begin{equation}
        \int_\Sigma d\vec{r}_\perp \; \nabla_\perp \cdot \vec{X}(\vec{r})  = \oint_{\partial\Sigma} \vec{X}(\vec{r}) \cdot \vec{v}_z dl,
    \end{equation}
    where the $\partial\Sigma$ is the boundary of the surface $\Sigma$.
    Taking as $\Sigma$ the full $xy$-plane, for the transverse-evanescent field, the right-hand side integral is equal to zero.

\end{proof}
%%%%%%%%%%%%%%%%%%%%
%%%%%%%%%%%%%%%%%%%%
%%%%%%%%%%%%%%%%%%%% 
%%%%%%%%%%%%%%%%%%%%
%%%%%%%%%%%%%%%%%%%%
%%%%%%%%%%%%%%%%%%%%
% \begin{lemma}
%     For four complex fields $\vec{X}_{1,2}(\vec{r})$ and $\vec{Y}_{1,2}(\vec{r})$, which are connected via the Hermitian tensor $\tensor{\kappa}(\vec{r})$ as    
%     \begin{equation}
%         \vec{Y}_1 = \tensor{\kappa} \vec{X}_1 \textrm{ and }   \vec{Y}_2 = \tensor{\kappa} \vec{X}_2, 
%     \end{equation}
%     the following equalities hold
%     \begin{equation}
%         \vec{X}_1 \cdot \vec{Y}^*_2 = \vec{Y}_1 \cdot \vec{X}^*_2,
%     \end{equation}
%     \begin{equation}
%         \mathrm{Im} (\vec{X}_1 \cdot \vec{Y}^*_1)  = 0.
%     \end{equation}
%     \label{lemma_2}
% \end{lemma}
% These equalities are the corollary of the $\tensor{\kappa}$ being hermitian.
\begin{lemma}
For two complex fields $\vec{X}_{1,2}(\vec{r})$ and an Hermitian tensor $\tensor{\kappa}(\vec{r})$     
the following equalities hold
     \begin{equation}
         \vec{X}_1 \cdot (\tensor{\kappa}\vec{X}_2)^* = (\tensor{\kappa}\vec{X}_1) \cdot \vec{X}^*_2,
     \end{equation}
\vspace{-4ex}
     \begin{equation}
         \mathrm{Im} (\vec{X}_1 \cdot (\tensor{\kappa}\vec{X}_1)^*) = 0.
     \end{equation}
     \label{lemma_2}
These are a direct consequence of $\tensor{\kappa}$ being Hermitian.

\end{lemma}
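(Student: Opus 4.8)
The plan is to reduce both identities to elementary Cartesian-component bookkeeping, using only the defining property of a Hermitian tensor, $\kappa_{ij} = \kappa_{ji}^*$. Writing the (unconjugated) dot product as $\vec{X}\cdot\vec{Y} = \sum_i X_i Y_i$, I would first expand the two sides of the first identity in components. The left-hand side becomes $\sum_{i,j} (X_1)_i\,\kappa_{ij}^*\,(X_2)_j^*$, while the right-hand side becomes $\sum_{i,j} \kappa_{ij}\,(X_1)_j\,(X_2)_i^*$. Relabelling the dummy indices $i\leftrightarrow j$ in the right-hand sum turns it into $\sum_{i,j}\kappa_{ji}\,(X_1)_i\,(X_2)_j^*$, and invoking Hermiticity $\kappa_{ji}=\kappa_{ij}^*$ makes it coincide term-by-term with the left-hand side. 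This establishes the first equality.

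For the second identity, the cleanest route is to specialise the first equality to $\vec{X}_2=\vec{X}_1$. The right-hand side $(\tensor{\kappa}\vec{X}_1)\cdot\vec{X}^*_1$ is then, by the componentwise definition of the dot product, exactly the complex conjugate of the left-hand side $\vec{X}_1\cdot(\tensor{\kappa}\vec{X}_1)^*$; indeed, conjugating $\sum_i (X_1)_i\,(\tensor{\kappa}\vec{X}_1)_i^*$ and relabelling gives $\sum_i (\tensor{\kappa}\vec{X}_1)_i\,(X_1)_i^*$. Hence the scalar $\vec{X}_1\cdot(\tensor{\kappa}\vec{X}_1)^*$ equals its own complex conjugate, so it is real and its imaginary part vanishes. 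Alternatively, one can verify this directly by the same index manipulation as above, comparing the quantity with its conjugate and closing the loop through $\kappa_{ij}=\kappa_{ji}^*$.

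There is, frankly, no real obstacle here: both statements are immediate once one commits to component notation and is careful about the placement of the conjugations. The only point demanding attention is the bookkeeping of which factors are conjugated---since the product in play is the bilinear (symmetric) dot product rather than a sesquilinear inner product, the conjugations are carried \emph{explicitly} by the stars attached to the field arguments, and the relabelling of summation indices must be performed consistently on every factor. As the lemma statement itself notes, these equalities are a direct consequence of $\tensor{\kappa}$ being Hermitian.
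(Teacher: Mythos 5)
Your proof is correct and fills in exactly the component-level verification that the paper leaves implicit, since the paper offers no proof beyond the remark that both identities follow directly from $\tensor{\kappa}=\tensor{\kappa}^\dagger$. The index relabelling plus $\kappa_{ji}=\kappa_{ij}^*$ for the first identity, and the specialisation $\vec{X}_2=\vec{X}_1$ showing the scalar equals its own conjugate for the second, are precisely the intended argument.
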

%
%%%%%%%%%%%%%%%%%%%%
%%%%%%%%%%%%%%%%%%%%
%%%%%%%%%%%%%%%%%%%% 
%%%%%%%%%%%%%%%%%%%%
%%%%%%%%%%%%%%%%%%%%
%%%%%%%%%%%%%%%%%%%%
%%%%%%%%%%%%%%%%%%%%
%%%%%%%%%%%%%%%%%%%%
%%%%%%%%%%%%%%%%%%%% 
%%%%%%%%%%%%%%%%%%%%
%%%%%%%%%%%%%%%%%%%%
%%%%%%%%%%%%%%%%%%%%
%
\vspace{2ex}
% \subsection*{A vector identity}
\paragraph*{A vector identity.}
\label{sub_vector_identity}
\noindent
In order to render the further derivations a little more compact, 
let us introduce vector 
% fields 
% products 
field valued operators
$\vec{R}$, $\vec{Q}_\alpha$, $\alpha \in {x, y, z}$, and $\vec{W}$, defined for fields $\vec{X}(\vec{r})$ and $\vec{Y}(\vec{r})$ as follows.
First we set
\begin{equation}
    \vec{R} (\vec{X}, \vec{Y}) \equiv \vec{X}  (\nabla \cdot \vec{Y})- \vec{Y}\times (\nabla \times \vec{X}).
    \label{seq_vector_R}
\end{equation}
Next, the recipe
\begin{equation}
    \vec{Q}_{\alpha}(\vec{X}, \vec{Y}) \equiv \big(Q_{\alpha x}(\vec{X}, \vec{Y}), Q_{\alpha y}(\vec{X}, \vec{Y}), Q_{\alpha z}(\vec{X}, \vec{Y}) \big),
\end{equation}
where the components of the tensor $\tensor{Q}$ are
\begin{equation}
    Q_{\alpha\beta}(\vec{X}, \vec{Y}) = X_\alpha Y_\beta - \frac12 \vec{X} \cdot \vec{Y}\delta_{\alpha\beta},
    \label{seq_tensor_Q}
\end{equation}
for $\alpha, \beta \in \set{x,y,z}$. 
Note that $\nabla \cdot \vec{Q}_\alpha(\vec{X}, \vec{Y}) \equiv \sum_\beta \frac{\partial}{\partial \beta} Q_{\alpha\beta}(\vec{X}, \vec{Y})$.
Last, the components of $\vec{W}$ are defined as 
\begin{equation}
    W_\alpha (\vec{X}, \vec{Y}) = \frac12 \sum_{\beta \in \set{ x,y,z  }} \left( X_\beta \dfrac{\partial}{\partial \alpha} Y_\beta - Y_\beta \dfrac{\partial}{\partial \alpha} X_\beta \right).
    \label{seq_vector_V}
\end{equation}
Then, the following lemma holds:
\begin{lemma}
    \begin{equation}
        R_\alpha (\vec{X}, \vec{Y}) = \nabla \cdot \vec{Q}_\alpha(\vec{X}, \vec{Y}) + W_\alpha(\vec{X}, \vec{Y}).
\label{lemma3eq}
    \end{equation}
    \label{lemma_3}
\end{lemma}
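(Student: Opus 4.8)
The plan is to verify the identity \eqref{lemma3eq} directly in Cartesian components, since every object appearing in it is defined componentwise. I would fix a single component index $\alpha$ and reduce both sides to explicit sums over $\beta \in \set{x,y,z}$, abbreviating $\partial_\beta \equiv \partial/\partial\beta$ and summing repeated indices by hand.

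First I would expand the triple product hidden in $\vec{R}$. Writing $[\nabla\times\vec{X}]_\gamma = \sum_{\mu\nu}\epsilon_{\gamma\mu\nu}\,\partial_\mu X_\nu$ and applying the standard contraction $\sum_\gamma \epsilon_{\alpha\beta\gamma}\epsilon_{\gamma\mu\nu} = \delta_{\alpha\mu}\delta_{\beta\nu} - \delta_{\alpha\nu}\delta_{\beta\mu}$, the term $[\vec{Y}\times(\nabla\times\vec{X})]_\alpha$ collapses to $\sum_\beta (Y_\beta\,\partial_\alpha X_\beta - Y_\beta\,\partial_\beta X_\alpha)$. Combined with $X_\alpha(\nabla\cdot\vec{Y}) = X_\alpha\sum_\beta\partial_\beta Y_\beta$, this puts $R_\alpha$ into the clean componentwise form $X_\alpha\sum_\beta\partial_\beta Y_\beta + \sum_\beta Y_\beta\,\partial_\beta X_\alpha - \sum_\beta Y_\beta\,\partial_\alpha X_\beta$.

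Next I would compute $\nabla\cdot\vec{Q}_\alpha = \sum_\beta\partial_\beta Q_{\alpha\beta}$ by the product rule, treating the two pieces of $Q_{\alpha\beta}$ from \eqref{seq_tensor_Q} separately. The term $\sum_\beta\partial_\beta(X_\alpha Y_\beta)$ produces $X_\alpha\sum_\beta\partial_\beta Y_\beta$ plus $\sum_\beta Y_\beta\,\partial_\beta X_\alpha$, whereas the contracted piece gives $-\tfrac12\sum_\beta\partial_\beta(\vec{X}\cdot\vec{Y}\,\delta_{\alpha\beta}) = -\tfrac12\,\partial_\alpha(\vec{X}\cdot\vec{Y})$, i.e.\ the symmetric pair $-\tfrac12\sum_\beta(Y_\beta\,\partial_\alpha X_\beta + X_\beta\,\partial_\alpha Y_\beta)$. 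Adding $W_\alpha = \tfrac12\sum_\beta(X_\beta\,\partial_\alpha Y_\beta - Y_\beta\,\partial_\alpha X_\beta)$ from \eqref{seq_vector_V}, the two $X_\beta\,\partial_\alpha Y_\beta$ contributions cancel exactly, while the two $Y_\beta\,\partial_\alpha X_\beta$ contributions merge into $-\sum_\beta Y_\beta\,\partial_\alpha X_\beta$.

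Finally I would match the two results: $\nabla\cdot\vec{Q}_\alpha + W_\alpha$ reduces to $X_\alpha\sum_\beta\partial_\beta Y_\beta + \sum_\beta Y_\beta\,\partial_\beta X_\alpha - \sum_\beta Y_\beta\,\partial_\alpha X_\beta$, which is precisely the expression already obtained for $R_\alpha$, closing the proof. The computation is elementary and requires nothing beyond differentiability of $\vec{X}$ and $\vec{Y}$; the Hermiticity and transverse-evanescence of Lemmas~\ref{lemma_1} and~\ref{lemma_2} play no role here. The only genuine obstacle is bookkeeping: the factors of $\tfrac12$ and the interplay between the $-\tfrac12\,\partial_\alpha(\vec{X}\cdot\vec{Y})$ term of $\nabla\cdot\vec{Q}_\alpha$ and the antisymmetric $W_\alpha$ must be tracked with care, since it is exactly the engineered cancellation of the symmetric $\partial_\alpha$-derivative terms that makes the identity hold.
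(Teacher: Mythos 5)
Your proof is correct, and it is essentially the paper's own approach: a direct verification of the identity in Cartesian components, hinging on the same cancellation between the $-\tfrac12\,\partial_\alpha(\vec{X}\cdot\vec{Y})$ piece of $\nabla\cdot\vec{Q}_\alpha$ and the antisymmetric $W_\alpha$. The only difference is organizational — you expand both sides to a common componentwise normal form using the $\epsilon$--$\delta$ contraction, valid for all $\alpha$ at once, whereas the paper rearranges $R_x$ term by term (add-and-subtract of $\partial_x(X_xY_x)$, splitting a bracket in half) into $\nabla\cdot\vec{Q}_x + W_x$ and treats $R_y$, $R_z$ as analogous.
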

\begin{proof}
    Let us take $\alpha=x$ and consider the $x$-component of $\vec{R}$,
    \begin{multline}
        R_x =  X_x \frac{\partial Y_x}{\partial x}  +  \underbrace{X_x \frac{\partial Y_y}{\partial y}}_1  +  \underbrace{X_x \frac{\partial Y_z}{\partial z}}_2    -  Y_y\frac{\partial X_y}{\partial x} \\ +   \underbrace{Y_y \frac{\partial X_x}{\partial y}}_1  - Y_z \frac{\partial X_z}{\partial x}  + \underbrace{Y_z \frac{\partial  X_x}{\partial z}}_2.
    \end{multline}
    After an addition and a subtraction of the term $\frac{\partial}{\partial x} (X_x Y_x)$, and combining the terms `$1$' and `$2$', we obtain
    \begin{multline}
        R_x = \frac{\partial}{\partial x} (X_x Y_x)  + \frac{\partial}{\partial y} (Y_y X_x)    +   \frac{\partial}{\partial z} (X_x Y_z)  \\ - \frac{\partial}{\partial x} (X_x Y_x)  +    X_x \frac{\partial Y_x}{\partial x} -  Y_y\frac{\partial X_y}{\partial x}  - Y_z \frac{\partial X_z}{\partial x} .
    \end{multline}
    Then, the identity
    \begin{align}
        - Y  \frac{\partial X}{\partial x} & = X \frac{\partial Y}{\partial x}   - \frac{\partial}{\partial x} (X Y), 
    \end{align}
    when applied to the last two terms, leads to 
    \begin{multline}
        R_x =        \Big[\frac{\partial}{\partial x} (X_x Y_x)+\frac{\partial}{\partial y} (X_x Y_y) +  \frac{\partial}{\partial z} (X_x Y_z) \Big] \\
        - \Big[ \frac{\partial}{\partial x} (X_x Y_x) + \frac{\partial}{\partial x} (X_y Y_y) + \frac{\partial}{\partial x} (X_z Y_z)    \Big]    \\
        +   X_x \frac{\partial Y_x}{\partial x} + X_y \frac{\partial Y_y}{\partial x}    + X_z \frac{\partial Y_z}{\partial x}.
        \label{seq_Rx_intermediate}
    \end{multline}
    After splitting the second bracket into two equal parts, and recombining one of them with the terms from the third line, $R_x$ can be written as
    \begin{multline}
        R_x = \Big[
            \frac{\partial}{\partial x} (X_x Y_x)
            + \frac{\partial}{\partial y} (X_x Y_y)
            + \frac{\partial}{\partial z} (X_x Y_z)
            \Big] \\
        - \frac12\Big[
            \frac{\partial}{\partial x} (X_x Y_x)
            + \frac{\partial}{\partial x} (X_y Y_y)
            + \frac{\partial}{\partial x} (X_z Y_z)
            \Big] \\
        + \frac12\Big[
            X_x \frac{\partial Y_x}{\partial x} - Y_x \frac{\partial X_x}{\partial x}
            + X_y \frac{\partial Y_y}{\partial x} - Y_y \frac{\partial X_y}{\partial x} \\
            + X_z \frac{\partial Y_z}{\partial x} - Y_z \frac{\partial X_z}{\partial x}
            \Big].
        \label{seq_approx_buf_1}
    \end{multline}
        %%%%%%%
        %%%%%%%
    The first two brackets in Eq.~\eqref{seq_approx_buf_1} give us
    \begin{equation}
        \sum_{\beta \in \set{x,y,z}} \frac{\partial}{\partial \beta}  \Big( X_x Y_\beta - \frac12 \vec{X} \cdot \vec{Y}\delta_{x\beta} \Big) 
% \equiv  
= 
\nabla \cdot \vec{Q}_x(\vec{X}, \vec{Y}),
    \end{equation}
    while the third bracket in Eq.~\eqref{seq_approx_buf_1} is
    \begin{multline}
        \frac12 \sum_{\beta \in \set{x,y,z}} \Big( X_\beta \dfrac{\partial}{\partial x} Y_\beta - Y_\beta \dfrac{\partial}{\partial x} X_\beta \Big) 
% \equiv 
= 
W_x(\vec{X}, \vec{Y}).
    \end{multline}
    This finishes the proof for the component $R_x$ of Eq.~\eqref{lemma3eq}.
    Respective results for $R_y$ and $R_z$ are derived analogously.
\end{proof}

%!TEX root = main.tex

\section{Maxwell equations in the frequency domain for straight dielectric waveguides}

\noindent
In our study, we consider the macroscopic electromagnetic fields in a medium in the absence of free charges 
% $\rho\equiv0$ 
and free currents.
% $\vec{j}\equiv0$.
In addition, we restrict ourselves to linear, lossless, 
% spatially-local, 
non-magnetic,
% ($\mu\equiv1$), 
and potentially anisotropic dielectrics.
As a result, the corresponding Maxwell equations in the frequency domain are
%\begin{subequations}
    %\label{seq_maxwell_freq_WG_full}
    %\begin{align}
        %\label{seq_maxwell_freq_WG_1}
        %\nabla \times \vec{e}(\vec{r}, \omega) & =  i\omega  \vec{b}(\vec{r}, \omega) ,
        %\\
        %\label{seq_maxwell_freq_WG_2}
        %\nabla \times \vec{h}(\vec{r}, \omega) & = -i\omega \vec{d}(\vec{r}, \omega),
        %\\
        %\label{seq_maxwell_freq_WG_3}
        %\nabla \cdot \vec{d}(\vec{r}, \omega)  & = 0,
        %\\
        %\label{seq_maxwell_freq_WG_4}
        %\nabla \cdot \vec{b}(\vec{r}, \omega)  & = 0, 
    %\end{align}
%\end{subequations}
\begin{subequations}
    \label{seq_maxwell_freq_WG_full}
    \begin{align}
        \label{seq_maxwell_freq_WG_1}
        \nabla \times \vec{e} & =  i\omega  \vec{b} ,
        \\
        \label{seq_maxwell_freq_WG_2}
        \nabla \times \vec{h} & = -i\omega \vec{d},
        \\
        \label{seq_maxwell_freq_WG_3}
        \nabla \cdot \vec{d}  & = 0,
        \\
        \label{seq_maxwell_freq_WG_4}
        \nabla \cdot \vec{b}  & = 0, 
    \end{align}
\end{subequations}
with the material equations in the form
%%\begin{subequations}
    %\label{seq_material_freq_full}
    %\begin{align}
        %\label{seq_material_freq_5}
        %\vec{d}(\vec{r}, \omega) & = \epsilon_0 \tensor{\varepsilon}(\vec{r}_\perp, \omega) \cdot \vec{e}(\vec{r}, \omega), \\
        %\label{seq_material_freq_6}
        %\vec{b}(\vec{r}, \omega) & = \mu_0 \vec{h}(\vec{r}, \omega).
    %\end{align}
%\end{subequations}
\begin{subequations}
    \label{seq_material_freq_full}
    \begin{align}
        \label{seq_material_freq_5}
        \vec{d} & = \epsilon_0 \tensor{\varepsilon} \cdot \vec{e}, \\
        \label{seq_material_freq_6}
        \vec{b} & = \mu_0 \vec{h}.
    \end{align}
\end{subequations}
The electromagnetic fields 
$\vec{e}$, 
$\vec{h}$, 
$\vec{d}$, 
$\vec{b}$, 
depend on space $\vec{r}$ and frequency $\omega$. 
The dielectric profile of the channel is given through the 
relative permittivity $\tensor{\varepsilon}(\vec{r}_\perp, \omega)$ that varies with the cross sectional coordinates $\vec{r}_\perp$ only.
For lossless media, this tensor 
% $ \tensor{\varepsilon}(\vec{r}, \omega)$ 
is Hermitian, $\tensor{\varepsilon}=\tensor{\varepsilon}^\dagger$, where the symbol $.^\dagger$ denotes the adjoint.
Therefore, Lemma~\ref{lemma_2} is applicable to the electric fields $\vec{d}$ and $\vec{e}$.

%!TEX root = main.tex

\section{Derivation of orthogonality conditions}
\label{app_derivation}
\noindent
Consider a set of solutions of Eqs.~\eqref{seq_maxwell_freq_WG_full} and \eqref{seq_material_freq_full}, for the same waveguide structure, i.e.\ for the same permittivity $\tensor{\varepsilon}$, in modal form 
\begin{equation}
\label{mode_appendix}
(\vec{e}, \vec{h}, \vec{d}, \vec{b})(\vec{r}) = \big(\mathbfcal{E}, \mathbfcal{H}, \mathbfcal{D}, \mathbfcal{B} \big)(\vec{r}_\perp)\, e^{i k z} .
\end{equation}
Specifically, we pick two such solutions, one with index $n$, associated with fields 
$\vec{e}_n$, $\vec{h}_n$, $\vec{d}_n$, $\vec{b}_n$, 
mode profiles 
$\mathbfcal{E}_n$, $\mathbfcal{H}_n$, $\mathbfcal{D}_n$, $\mathbfcal{B}_n$, and propagation constant $k_n$, the other with index $m$. We assume that the profiles are integrable over the channel cross section domain (guided modes), and that these modes are nodegenerate, i.e., that 
$k_n \neq k_m$ for $n\neq m$.

%%%%%%%%%%%%%%%%%%%%
%%%%%%%%%%%%%%%%%%%%
%%%%%%%%%%%%%%%%%%%% 
%%%%%%%%%%%%%%%%%%%%
%%%%%%%%%%%%%%%%%%%%
%%%%%%%%%%%%%%%%%%%%

\subsection{Energy orthogonality condition}
\label{proof_for_energy_orthogonality_condition}

\noindent
The proof of Eq.~\eqref{eq_orthogonality_condition_energy} can be found, e.g., in Refs.~\cite{Vassallo_1991_book,Jackson_book,Yariv_Yeh_book}; however, for reasons of self-consistency, we briefly recall it here. 
%     Let us take two solutions of Maxwell equations: $ (\vec{e}_n, \vec{h}_n, \vec{d}_n, \vec{b}_n) = \big(\mathbfcal{E}_n, \mathbfcal{H}_n, \mathbfcal{D}_n, \mathbfcal{B}_n \big)e^{i k_n z} $ and  $(\vec{e}_m, \vec{h}_m, \vec{d}_m, \vec{b}_m) = \big(\mathbfcal{E}_m, \mathbfcal{H}_m, \mathbfcal{D}_m, \mathbfcal{B}_m \big)e^{i k_m z}$ with $k_n\neq k_m$ and construct a quantity
Based on the two modal solutions $n$, $m$, we consider the quantity
    \begin{equation}
        w_{nm} = \nabla \cdot (\vec{e}^*_n \times \vec{h}_m + \vec{e}_m \times \vec{h}^*_n ).
    \end{equation}
    Using Eq.~\eqref{seq_maxwell_freq_WG_2} and applying the vector identity $\nabla \cdot (\vec{X} \times \vec{Y} ) = (\nabla \times \vec{X}) \cdot \vec{Y} -\vec{X} \cdot (\nabla \times \vec{Y})$, 
    we obtain
    \begin{equation}
        w_{nm} =  i \omega \Big[\vec{e}^*_n \cdot \vec{d}_m -\vec{e}_m \cdot  \vec{d}_n^* \Big].
    \end{equation}
    For Hermitian $\tensor{\varepsilon}$, Eq.~\eqref{seq_material_freq_5} and Lemma~\ref{lemma_2} give us $w_{nm}=0$, and, therefore,
    \begin{equation}
        \nabla \cdot (\vec{e}^*_n \times \vec{h}_m + \vec{e}_m \times \vec{h}^*_n ) = 0.
    \end{equation}
    Applying Lemma~\ref{lemma_1}, and substituting the solutions in their modal form \eqref{mode_appendix}, we obtain
    \begin{multline}
        i(k_m-k_n) e^{ i(k_m-k_n) z } \\ \cdot \int d\vec{r}_\perp \:  \big[\mathbfcal{E}^*_n \times \mathbfcal{H}_m + \mathbfcal{E}_m \times \mathbfcal{H}^*_n \big]_z = 0.
    \end{multline}
This proves the energy orthogonality condition \eqref{eq_orthogonality_condition_energy}.

%%%%%%%%%%%%%%%%%%%%
%%%%%%%%%%%%%%%%%%%%
%%%%%%%%%%%%%%%%%%%% 
%%%%%%%%%%%%%%%%%%%%
%%%%%%%%%%%%%%%%%%%%
%%%%%%%%%%%%%%%%%%%%

\subsection{Momentum orthogonality condition}
\label{proof_for_momentum_orthogonality_condition}

\noindent
Here we present the derivation of 
% the momentum orthogonality condition 
Eq.~\eqref{eq_orthogonality_condition_momentum}.
 %   Let us take two solutions of Maxwell equations:  $ (\vec{e}_n, \vec{h}_n, \vec{d}_n, \vec{b}_n) = \big(\mathbfcal{E}_n, \mathbfcal{H}_n, \mathbfcal{D}_n, \mathbfcal{B}_n \big)e^{i k_n z} $ and  $(\vec{e}_m, \vec{h}_m, \vec{d}_m, \vec{b}_m) = \big(\mathbfcal{E}_m, \mathbfcal{H}_m, \mathbfcal{D}_m, \mathbfcal{B}_m \big)e^{i k_m z}$ with $k_n\neq k_m$ and construct a  vector
For a pair of modal solutions $n$, $m$, let us construct a vector field
    \begin{equation}
        \vec{f}^{(1)} \equiv \vec{e}_n(\nabla \cdot \vec{d}^*_m) +  (\nabla \times \vec{h}^*_m  - i \omega \vec{d}^*_m)\times \vec{b}_n + \vec{h}^*_m (\nabla\cdot \vec{b}_n) \equiv 0
        \label{seq_vector_f1_}
    \end{equation}
$\vec{f}^{(1)}$ vanishes due to Eqs.~\eqref{seq_maxwell_freq_WG_2}, \eqref{seq_maxwell_freq_WG_3} and \eqref{seq_maxwell_freq_WG_4}.
    Substituting Eq.~\eqref{seq_maxwell_freq_WG_1} into~\eqref{seq_vector_f1_}, we obtain
    \begin{multline}
        \vec{f}^{(1)} = \vec{e}_n(\nabla \cdot \vec{d}^*_m) - \vec{d}^*_m \times ( \nabla \times \vec{e}_n)  \\ +  \vec{h}^*_m (\nabla\cdot \vec{b}_n)- \vec{b}_n \times (\nabla \times \vec{h}^*_m) .
    \end{multline}
The $z$-component of $\vec{f}^{(1)}$ can be written in terms of the product \eqref{seq_vector_R} as
    \begin{equation}
        f^{(1)}_z = \vec{R}_z (\vec{e}_n, \vec{d}^*_m) + \vec{R}_z (\vec{h}^*_m, \vec{b}_n) = 0 .
    \end{equation}
In a similar way, we define a second vector
    \begin{multline}
        \vec{f}^{(2)} \equiv \vec{e}^*_m (\nabla \cdot \vec{d}_n) - \vec{d}_n \times (\nabla \times \vec{e}^*_m)  \\ + \vec{h}_n (\nabla\cdot \vec{b}^*_m) - \vec{b}_m^* \times (\nabla \times \vec{h}_n) \equiv 0
    \end{multline}
    with the $z$-component
    \begin{equation}
        f^{(2)}_z = \vec{R}_z (\vec{e}^*_m, \vec{d}_n) + \vec{R}_z (\vec{h}_n, \vec{b}^*_m) = 0 .
    \end{equation}
Applying Lemma~\ref{lemma_3} to the sum $f^{(1)}_z+f^{(2)}_z=0$, we get
    \begin{multline}
        \nabla \cdot \Big[  \vec{Q}_z(\vec{e}_n, \vec{d}^*_m) +  \vec{Q}_z(\vec{h}^*_m, \vec{b}_n)  \\ + \vec{Q}_z(\vec{e}^*_m, \vec{d}_n) +  \vec{Q}_z( \vec{h}_n, \vec{b}^*_m)  \Big] \\
        + W_z(\vec{e}_n, \vec{d}^*_m)
        + W_z(\vec{e}^*_m, \vec{d}_n) = 0.
        \label{seq_app_divergence_1}
    \end{multline}
    Here we used Eq.~\eqref{seq_material_freq_6}, which gives us $W_z(\vec{h}^*_m, \vec{b}_n) + W_z(\vec{h}_n, \vec{b}^*_m)=0$.
By substituting the mode representations \eqref{mode_appendix}, we obtain, for the last two terms of the left hand side,
    \begin{multline}
        W_z(\vec{e}_n, \vec{d}^*_m) + W_z(\vec{e}^*_m, \vec{d}_n) =\frac12 e^{i(k_n - k_m)z} \cdot  \\ \cdot \big[
            i k_m  (\mathbfcal{D}_n \cdot \mathbfcal{E}^*_m   -  \mathbfcal{E}_n \cdot \mathbfcal{D}^*_m)  \\
            +  i k_n (\mathbfcal{E}^*_m \cdot \mathbfcal{D}_n   - \mathbfcal{D}^*_m \cdot  \mathbfcal{E}_n) \big] = 0.
            \label{seq_intermediate_step_WW}
    \end{multline}
In this Lemma~\ref{lemma_2} was used for the last equality.
As a result, Eq.~\eqref{seq_app_divergence_1} now reads
    \begin{multline}
        \nabla \cdot \big[  e^{i (k_n-k_m) z} \big(  \vec{Q}_z(\mathbfcal{E}_n, \mathbfcal{D}^*_m) +  \vec{Q}_z(\mathbfcal{H}^*_m, \mathbfcal{B}_n) + \\ \vec{Q}_z(\mathbfcal{E}^*_m, \mathbfcal{D}_n) +  \vec{Q}_z(\mathbfcal{H}_n, \mathbfcal{B}^*_m)  \big) \big] = 0.
        \label{seq_app_divergence_full}
    \end{multline}
After integrating over the transverse plane, and applying Lemma~\ref{lemma_1}, we arrive at
    \begin{multline}
        i (k_n-k_m) \int d\vec{r}_\perp \big[   Q_{zz}(\mathbfcal{E}_n, \mathbfcal{D}^*_m) +  Q_{zz}(\mathbfcal{H}^*_m, \mathbfcal{B}_n) + \\ Q_{zz}(\mathbfcal{E}^*_m, \mathbfcal{D}_n) +  Q_{zz}(\mathbfcal{H}_n, \mathbfcal{B}^*_m)  \big]  = 0.
   \end{multline}
This proves the momentum orthogonality condition Eq.~\eqref{eq_orthogonality_condition_momentum}.

%%%%%%%%%%%%%%%%%%%%
%%%%%%%%%%%%%%%%%%%%
%%%%%%%%%%%%%%%%%%%% 
%%%%%%%%%%%%%%%%%%%%
%%%%%%%%%%%%%%%%%%%%
%%%%%%%%%%%%%%%%%%%%

\subsection{Connection between energy and momentum orthogonality conditions}
\label{proof_for_orthogonality_connection}
\noindent
It remains to derive the relation \eqref{eq_orthogonality_condition_connection} between the normalization constants that appear in the energy and momentum orthogonality conditions.
Let us consider one of the modes \eqref{mode_appendix} of our waveguide with index $n$. Substitution into Eqs.~\eqref{seq_maxwell_freq_WG_1} and \eqref{seq_maxwell_freq_WG_2} gives us
\begin{align}
    \nabla \times [\mathbfcal{E}_ne^{i k_n z}] & = i \omega  \mathbfcal{B}_ne^{i k_n z},
    \\
    \nabla \times [\mathbfcal{H}_ne^{i k_n z}] & = -i \omega  \mathbfcal{D}_ne^{i k_n z}.
\end{align}
Using the identities
\begin{align}
    \nabla \times [\mathbfcal{X}_ne^{i k_n z}] & = \Big[i k_n ( \vec{v}_z \times \mathbfcal{X}_n)  +  ( \nabla \times \mathbfcal{X}_n) \Big]e^{i k_n z},
    \\
    \nabla \cdot [\mathbfcal{X}_ne^{i k_n z}]    & =  \Big[ i k_n ( \vec{v}_z \cdot \mathbfcal{X}_n)  +  ( \nabla \cdot \mathbfcal{X}_n) \Big]  e^{i k_n z},
\end{align}
the equations for the profile fields take the form
% 
% \begin{subequations}
%     \label{seq_max_WG_allcomp}
%         \begin{align}
%             i k_n [ \vec{v}_z \times \mathbfcal{E}_n]  +  [ \nabla \times \mathbfcal{E}_n]      & = i \omega  \mathbfcal{B}_n,
%             \\
%             i k_n [ \vec{v}_z \times \mathbfcal{H}_n]  +  [ \nabla \times \mathbfcal{H}_n]      & = -i \omega  \mathbfcal{D}_n,
%             \\
%             \vec{v}_z\cdot[ \nabla \times \mathbfcal{E}_n]   & = i \omega  \vec{v}_z\cdot\mathbfcal{B}_n,
%             \\
%             \vec{v}_z\cdot[ \nabla \times \mathbfcal{H}_n]   & = -i \omega  \vec{v}_z\cdot\mathbfcal{D}_n.
%         \end{align}
% \end{subequations}
\begin{subequations}
    \label{seq_max_WG_allcomp}
        \begin{align}
            i k_n [ \vec{v}_z \times \mathbfcal{E}_n]  +  [ \nabla \times \mathbfcal{E}_n]      & = i \omega  \mathbfcal{B}_n,
            \\
            i k_n [ \vec{v}_z \times \mathbfcal{H}_n]  +  [ \nabla \times \mathbfcal{H}_n]      & = -i \omega  \mathbfcal{D}_n,
        \end{align}
\end{subequations}
or
\begin{subequations}
    \label{seq_max_WG_allcomp_vz}
        \begin{align}
            \vec{v}_z\cdot[ \nabla \times \mathbfcal{E}_n]   & = i \omega  \vec{v}_z\cdot\mathbfcal{B}_n,
            \\
            \vec{v}_z\cdot[ \nabla \times \mathbfcal{H}_n]   & = -i \omega  \vec{v}_z\cdot\mathbfcal{D}_n.
        \end{align}
\end{subequations}
The momentum orthogonality condition \eqref{eq_orthogonality_condition_momentum} for the product of this $n$-th mode with itself is
\begin{equation}
    \int d\vec{r}_{\perp} ~ \Big[\mathbfcal{E}_n \odot  \mathbfcal{D}^*_n + \mathbfcal{E}^*_n \odot  \mathbfcal{D}_n + \mathbfcal{B}_n \odot  \mathbfcal{H}^*_n + \mathbfcal{B}^*_n \odot  \mathbfcal{H}_n \Big] = \xi\up{M}_n.
    \label{seq_orthog_intermed}
\end{equation}
By substituting Eqs.~\eqref{seq_max_WG_allcomp_vz},  we obtain the following terms for the contributions to the integrand of 
Eq.~\eqref{seq_orthog_intermed}: 
\begin{widetext}
    \begin{subequations}
    \label{seq_detailed}
        \begin{align}
            2\omega \: \mathbfcal{E}_n \odot  \mathbfcal{D}^*_n & =
            k_n  \mathbfcal{E}_n \cdot [ \mathbfcal{H}^*_n \times \vec{v}_z ]
            -i \mathbfcal{E}_n \cdot [ \nabla \times \mathbfcal{H}^*_n]
            +i 2 (\vec{v}_z \cdot \mathbfcal{E}_n) ( \vec{v}_z\cdot[ \nabla \times \mathbfcal{H}^*_n] ),
            \\
            2\omega \:\mathbfcal{E}^*_n \odot  \mathbfcal{D}_n  & =
            k_n \mathbfcal{E}^*_n \cdot [\mathbfcal{H}_n \times \vec{v}_z ]
            +i \mathbfcal{E}^*_n \cdot[ \nabla \times \mathbfcal{H}_n]
            -i 2 (\vec{v}_z \cdot \mathbfcal{E}^*_n) (\vec{v}_z\cdot[ \nabla \times \mathbfcal{H}_n]),
            \\
            2\omega \: \mathbfcal{B}_n \odot  \mathbfcal{H}^*_n & =
            - k_n \mathbfcal{H}^*_n \cdot [ \mathbfcal{E}_n \times \vec{v}_z ]
            - i [ \nabla \times \mathbfcal{E}_n]  \cdot \mathbfcal{H}^*_n
            +i 2 (\vec{v}_z\cdot[ \nabla \times \mathbfcal{E}_n]) (\vec{v}_z \cdot \mathbfcal{H}^*_n),
            \\
            2\omega \: \mathbfcal{B}^*_n \odot \mathbfcal{H}_n  & =
            - k_n  \mathbfcal{H}_n \cdot [ \mathbfcal{E}^*_n  \times \vec{v}_z]
            + i [ \nabla \times \mathbfcal{E}^*_n] \cdot \mathbfcal{H}_n
            -i 2 ( \vec{v}_z\cdot[ \nabla \times \mathbfcal{E}^*_n]) (\vec{v}_z \cdot \mathbfcal{H}_n ).
        \end{align}
    \end{subequations}
\end{widetext}
Therefore, Eq.~\eqref{seq_orthog_intermed} can be written as
\begin{equation}
   \xi\up{M}_n = \dfrac12 (I_1 + I_2 + I_3 ),
\end{equation}
where we consider each of the contributions $I_n$ in turn.
\begin{enumerate}
    \item  
Taking the first terms from the right-sides of Eqs.~\eqref{seq_detailed}, and applying the identity $\vec{X} \cdot [\vec{Y}\times \vec{Z} ] =  [\vec{X}\times \vec{Y}]\cdot \vec{Z}$, we obtain
\begin{equation}
    I_1 = 2\frac{k_n}{\omega}  \int d\vec{r}_{\perp} ~\Big[ [ \mathbfcal{E}_n \times  \mathbfcal{H}^*_n ]_z  +
    [\mathbfcal{E}^*_n \times  \mathbfcal{H}_n  ]_z   \Big] = 2\frac{k_n}{\omega}   \xi\up{E}_n.
\end{equation}
\item 
The second terms from the right-sides of Eqs.~\eqref{seq_detailed} evaluate to
\begin{equation}
    I_2 = i \int d\vec{r}_{\perp} ~(C_2 - C_2^*) = 2i  \int d\vec{r}_{\perp} ~ \mathrm{Im} [C_2],
\end{equation}
where
$ C_2 =  [ \nabla \times \mathbfcal{H}_n] \cdot \mathbfcal{E}^*_n +  [ \nabla \times \mathbfcal{E}^*_n] \cdot \mathbfcal{H}_n$.
\item 
Similarly, the third terms from Eqs.~\eqref{seq_detailed} are
\begin{equation}
    I_3 = i \int d\vec{r}_{\perp} ~(C_3 - C_3^*) = 2i  \int d\vec{r}_{\perp} ~\mathrm{Im} [C_3],
\end{equation}
where
$ C_3 =  2(\vec{v}_z \cdot \mathbfcal{E}_n) ( \vec{v}_z\cdot[ \nabla \times \mathbfcal{H}^*_n] ) + 2 (\vec{v}_z \cdot [ \nabla \times \mathbfcal{E}_n]) (\vec{v}_z \cdot \mathbfcal{H}^*_n)$.
\end{enumerate}
As a result, the substitution of Eqs.~\eqref{seq_detailed} into Eq.~\eqref{seq_orthog_intermed} gives us
\begin{equation}
    \xi\up{M}_n =  \frac{k_n}{\omega} \xi\up{E}_n + i  \Big(\mathrm{Im} [C_2] + \mathrm{Im} [C_3] \Big).
\end{equation}
Both $\xi\up{M}_n$ and $\xi\up{E}_n$ are real, which proves the equality \eqref{eq_orthogonality_condition_connection},
\begin{equation}
    \frac{\xi\up{M}_n}{\xi\up{E}_n} =  \frac{k_n}{\omega}.
\end{equation}
For a final check, by evaluating $I_2$ and $I_3$ in components, by applying integration by parts to certain terms, and by taking into account that boundary terms vanish due to the localization of the mode profiles, one can verify that $I_2+I_3 = 0$.

\end{document}